\documentclass[a4paper,UKenglish]{lipics-v2018}

\usepackage{booktabs} % For formal tables

\usepackage[ruled]{algorithm2e} % For algorithms
\usepackage{xargs}
\usepackage{marginnote}
\usepackage[pdftex,dvipsnames]{xcolor}  % Coloured text etc.
\usepackage[colorinlistoftodos,prependcaption,textsize=tiny]{todonotes}
\usepackage[inline]{enumitem}

\newcommandx{\note}[2][1=]{\todo[linecolor=blue,backgroundcolor=blue!25,bordercolor=blue,#1]{#2}}

\theoremstyle{definition}

% my macros
\newcommand{\xp}[1]{\ensuremath{f^ {-1}({#1})}}

\newcommand{\mononm}{core minimal\xspace}
\newcommand{\semantic}[1]{\ensuremath{\mathtt{sem}\mbox{-}{#1}}}
\newcommand{\sghw}{\ensuremath{\semantic{ghw}}}
\newcommand{\sfhw}{\ensuremath{\semantic{fhw}}}
\newcommand{\sadw}{\ensuremath{\semantic{adw}}}
\newcommand{\ssubw}{\ensuremath{\semantic{subw}}}

\bibliographystyle{plainurl}% the recommnded bibstyle

\title{Semantic Width of Conjunctive Queries and Constraint Satisfaction Problems}

\author{Georg Gottlob}{University of Oxford, UK \& TU Wien, Austria}{georg.gottlob@cs.ox.ac.uk \& gottlob@dbai.tuwien.ac.at}{}{}

\author{Matthias Lanzinger}{TU Wien, Austria}{mlanzing@dbai.tuwien.ac.at}{}{}

\author{Reinhard Pichler}{TU Wien, Austria}{pichler@dbai.tuwien.ac.at}{}{}

\authorrunning{G. Gottlob, M. Lanzinger and R. Pichler} %mandatory. First: Use abbreviated first/middle names. Second (only in severe cases): Use first author plus 'et. al.'

\Copyright{Georg Gottlob, Matthias Lanzinger and Reinhard Pichler}%mandatory, please use full first names. LIPIcs license is "CC-BY";  http://creativecommons.org/licenses/by/3.0/

\subjclass{H.2.4 Systems - Query processing}% mandatory: Please choose ACM 1998 classifications from http://www.acm.org/about/class/ccs98-html . E.g., cite as "F.1.1 Models of Computation". 
\keywords{Conjunctive queries; Constraint satisfaction problems, hypergraphs; semantic optimization; semantic width;
fractional hypertree width; submodular width; decompositions}% mandatory: Please provide 1-5 keywords
% Author macros::end %%%%%%%%%%%%%%%%%%%%%%%%%%%%%%%%%%%%%%%%%%%%%%%%%

\nolinenumbers % arxiv caring about the important things...

\begin{document}
\hideLIPIcs

\maketitle

\begin{abstract}
Answering Conjunctive Queries (CQs) and solving Constraint Satisfaction Problems (CSPs) are arguably 
among the most fundamental tasks in Computer Science. They are classical NP-complete problems. 
Consequently, the search for tractable fragments of these problems has received a lot of 
research interest over the decades. This research 
has traditionally progressed along three orthogonal threads.
  \begin{enumerate*}[label=\itshape\alph*\upshape)]
  \item Reformulating queries into simpler, equivalent, queries
    (semantic optimization)
    
  \item Bounding answer sizes based on
    structural properties of the query
  \item Decomposing the query in such a way that global consistency follows from local consistency.
\end{enumerate*}

Much progress has been made by various works that connect two of these threads. Bounded answer sizes and decompositions have been shown to be tightly connected through the important notions of fractional hypertree width and, more recently, submodular width. Recent papers by Barcel\'{o} et al. study decompositions up to generalized hypertree width under semantic optimization. In this work, we connect all three of these threads by introducing a general notion of \emph{semantic width} and investigating semantic versions of fractional hypertree width, adaptive width, submodular width and the fractional cover number.
\end{abstract}

\section{Introduction}
\label{sec:intro}

Answering conjunctive queries (CQs) is one of the central themes of
database theory. As the problem is known to be NP-complete in general,
identifying tractable classes of CQs has been the focus of much
research~(see e.g., \cite{gottlob2016hypertree} and many references therein). 
The study of bounded decompositions and answer
sizes in particular has produced many remarkable results and has
significantly advanced our understanding of what makes CQs \emph{hard}
in general.

A largely separate line of research has followed the question of
minimizing the query itself. Given a query it seems enticing to only
want the minimal (in the number of atoms) version that produces the
same result, i.e., a semantically equivalent query. This is referred
to as semantic optimization. In a classic result, Chandra and
Merlin~\cite{chandra1977optimal} show that queries are (semantically)
equivalent iff they are homomorphically equivalent. The minimal
equivalent CQ is called the core of a query.  However, while semantic
optimization simplifies the query in a sense, even finding the minimal
equivalent query still provides little information on the complexity
of its execution.

The equivalence classes induced by semantic equivalence can be viewed
as the collection of all possible ways to formulate the same question
to the database (in the language of CQs). It is then natural to ask if
some formulations can be answered more efficiently than others and how
to identify and derive these efficiently answerable formulations. Work
by Dalmau et al.~\cite{dalmau2002constraint} and Barcel\'{o} et
al.~\cite{barcelo2016semantic, barcelo2017semantic} investigates the
treewidth, acyclicity and generalized hypertree width (ghw) under
semantic equivalence. Specifically in the case of acyclicity the term
\emph{semantic acyclicity} is used. This theme motivates the
introduction of \emph{semantic width} (for any notion of width) as a
measure for the complexity of the underlying question to the database
rather than the complexity of a specific formulation.

\begin{definition}
  \label{def:dodgysemwidth}
  The semantic width of a
  conjunctive query $q$ is the minimal width over all conjunctive queries that are
  equivalent to $q$.
\end{definition}

In this sense, the main goal of this work is the investigation of
\emph{semantic fractional hypertree width}, \emph{semantic adaptive
  width} and \emph{semantic submodular
  width}. Section~\ref{sec:prelim} provides preliminaries and formal
definitions of the central concepts. In Section~\ref{sec:semprop} we introduce a formal version of 
Definition~\ref{def:dodgysemwidth} and our central machinery. We show that
the \emph{semantic fractional cover number} (analogous to semantic width) is determined by the fractional cover number of the core in Section~\ref{sec:homgraph}. The same result is then also derived, for all the semantic widths stated above, in Section~\ref{sec:swidth}.

\section{Preliminaries}
\label{sec:prelim}

The initial definitions here are adapted from~\cite{gottlob2002hypertree}. 
A relation schema $R$ consists of a name $r$ and 
an ordered list of
attributes. An \emph{attribute} $a$ has an associated countable
\emph{domain} $dom(a)$. A \emph{relation instance} of a schema with
attributes $(a_1, \dots, a_k)$ is a finite subset of
$dom(a_1) \times \dots \times dom(a_k)$. The elements of relation instances are called \emph{tuples}. A \emph{database schema} is a
finite set of relation schemas. A \emph{database (instance)}
\textbf{D} over a database schema $\{R_1, \dots, R_m\}$ consists of
relation instances for every schema $R_1,\dots,R_m$. The
\emph{universe} of a database \textbf{D} is the set of all values occurring for attributes of the
relation instances of \textbf{D}.

A (rule based) \emph{conjunctive query} $q$ on a database schema
$DS=\{R_1, \dots, R_m\}$ is a rule of the form
$q\colon ans(\overline{x}) \leftarrow r_1(\overline{x_1}) \land \cdots
\land r_n(\overline{x_n})$ where $r_1, \dots,r_n$ are relation names
of $DS$, $ans$ is a relation name not in $DS$ and
$\overline{x},\overline{x_1},\dots,\overline{x_n}$ are ordered lists
of terms matching the number of attributes of the respective relation
schema. We refer to the atoms in the body as $atoms(q)$ and the
variables occuring in an atom $R$ as $var(R)$.

The \emph{answer} of $Q$ on a database \textbf{D} with universe $U$ is
a relation $ans$ with attributes $\overline{x}$. The tuples of $ans$
are all tuples $ans(\overline{x})\sigma$ such that:
$\sigma \colon  var(Q) \to U$ is a substitution and
$r_i(\overline{x_i}) \sigma$ is in a relation instance of \textbf{D}
for $1\leq i \leq n$. A substitution $\sigma$ is applied to an atom
$A$ by replacing each variable $X$ in $A$ with $\sigma(X)$.

For conjunctive queries $q_1,q_2$, a \emph{homomorphism} from $q_1$ to $q_2$ is a mapping $f\colon vars(q_1) \to vars(q_2) \cup constants$ such that
\begin{enumerate}[topsep=0pt]
\item For every variable $x$ in the head of $q_1$, $f(x) = x$
\item For every atom $R(x_1, \dots, x_k) \in atoms(q_1)$ there exists an atom $R(f(x_1), \dots, f(x_k)) \in atoms(q_2)$.
\end{enumerate}
If there exists a homomorphism in both directions we say $q_1$ and $q_2$ are \emph{homomorphically equivalent}, we write $q_1 \simeq q_2$.

We call CQs equivalent if they have the same answer over any database
instance. It is known that two queries are equivalent iff they are
homomorphically equivalent~\cite{chandra1977optimal}. For a CQ $q$, a
minimal (in the number of atoms) equivalent CQ is called a core. All
cores of $q$ are isomorphic and it is therefore common to refer to
\emph{the} core of $q$ (we sometimes write $Core(q)$). Another
observation by Chandra and Merlin is important in our context: The
core can always be obtained by an endomorphism on the query or
equivalently by deleting atoms.

A \emph{hypergraph} $H = (V(H), E(H))$ is a pair, where $V(H)$ is a the set of vertices and the set of \emph{hyperedges} $E(H)$ is a set of subsets of $V(H)$. In this work we assume hypergraphs are finite, and all vertices are contained in some hyperedge (there are no isolated vertices).  We write $I_v$ for the set of all
incident edges of a vertex $v$.  A \emph{homomorphism} $G \to H$ for
hypergraphs is a mapping $f \colon V(G) \to V(H)$ s.t. if $e \in E(G)$,
then $\{ f(v) \mid v \in e \} \in E(H)$. Function application is
extended to hyperedges and sets of hyperedges in the usual, element-wise, fashion:  for instance, 
for $e \in E(G)$, we write $f(e)$ to denote $\{ f(v) \mid v \in e \}$.
Likewise, for $E' \subseteq E(G)$, we write $f(E)$ to denote $\{ f(e) \mid e \in E' \}$.
Note that if two CQs are homomorphic, then also their associated hypergraphs are homomorphic, while the 
converse is, in general, not true.

A fractional edge cover $\mathbf{x}$ of a set $W \subseteq V(H)$ for a hypergraph $H=(V(H), E(H))$ is a (not necessarily optimal) solution of the linear program:
\begin{equation*}
  \begin{array}{ll@{}l}
    \text{minimize} & \sum_e x_e & \\
    \text{subject to} & \sum_{e : v \in e} x_e \geq 1 \quad \mbox{}  & \mbox{for all } v \in W \\
                    & x_e \geq 0 & \mbox{for all} \ e \in E(H) 
  \end{array}
\end{equation*}
If no subset is specified the cover of all vertices is assumed. We
call the result of the objective function the \emph{total weight} of a
cover. The minimal total weight of a fractional edge cover of $H$ is
the \emph{fractional edge cover number} $\rho^*(H)$. For a CQ $q$,
$\rho^*(q)$ is the fractional cover number of the hypergraph
associated with $q$.

\section{Semantic Properties and Core Minimal Functions}
\label{sec:semprop}

As stated in the introduction, our main interest is the complexity of
the underlying question posed by a query. To this end we move away
from looking at specific queries that implement the question and
instead consider the whole equivalence class of queries with the
intended output. Instead of looking at the width of a query we now
want to study the \emph{semantic width} of a query, the width of the
most \emph{efficient} way to equivalently formulate the query.

\begin{definition}
  Let $\mathcal{Q}$ be the class of all conjunctive queries and
  $w\colon \mathcal{Q} \to \mathbb{R}^+$. We define the \emph{semantic variant of $w$} as $\semantic{w}(q) := \inf \{ w(q') \mid q' \simeq q \}$.
\end{definition}

The previous definition also illustrates one of the main issues of
semantic width, as there are infinitely many equivalent CQs, it is
inherently unclear how it can be computed. The rest of this section
provides a framework to determine these semantic variants
in a more practical manner.

Barcel\'{o} et al. investigate what they call the reformulation
problem for ghw in~\cite{barcelo2017semantic}. This is the problem
whether, given a CQ $q$, there exists an equivalent query with a ghw
less or equal to some specified threshold.  We generalize their
reformulation result to the following Lemma~\ref{lem:reform} for a
more general class of functions.

\begin{definition}
  let $\mathcal{Q}$ be the class of all conjunctive queries. We call a function $w\colon \mathcal{Q} \to \mathbb{R}^+$ \emph{\mononm} if it is invariant under isomorphisms and for any $q \in \mathcal{Q}$: $w(Core(q)) \leq w(q)$.
\end{definition}

\begin{samepage}
\begin{lemma}
  \label{lem:reform}
  Fix $k \geq 1$, and let $w$ be a \mononm function. For each conjunctive query $q$ the following are equivalent:
  \begin{enumerate}
  \item There exists a $q'$ equivalent to $q$ with $w(q') \leq k$.
  \item $w(Core(q)) \leq k$.
  \end{enumerate}
\end{lemma}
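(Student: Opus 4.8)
The plan is to prove the equivalence by establishing both directions, with the direction $(2) \Rightarrow (1)$ being essentially trivial and the direction $(1) \Rightarrow (2)$ carrying the real content.

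For $(2) \Rightarrow (1)$: simply take $q' = Core(q)$. Since $Core(q)$ is by definition equivalent to $q$, and by assumption $w(Core(q)) \leq k$, this query witnesses statement $(1)$. (Strictly speaking one should note that $Core(q)$ is itself a conjunctive query, which is immediate.)

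For $(1) \Rightarrow (2)$: suppose $q'$ is equivalent to $q$ with $w(q') \leq k$. The key observation is that equivalence of CQs coincides with homomorphic equivalence (Chandra--Merlin, quoted in the preliminaries), and homomorphically equivalent queries have isomorphic cores. Hence $Core(q') \cong Core(q)$. Now I invoke the two defining properties of a \mononm function $w$: first, since $w(Core(q')) \leq w(q')$, we get $w(Core(q')) \leq k$; second, since $w$ is invariant under isomorphisms and $Core(q') \cong Core(q)$, we conclude $w(Core(q)) = w(Core(q')) \leq k$. This establishes $(2)$.

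The only step that requires a moment's care — and the place I would expect a referee to want a citation — is the claim that homomorphically equivalent queries have isomorphic cores. This follows from the standard facts recalled in the preliminaries: all cores of a given CQ are isomorphic, and the core of $q$ is obtained via an endomorphism, so $Core(q)$ is a retract of $q$; since $q \simeq q'$, composing homomorphisms shows $Core(q)$ and $Core(q')$ are homomorphically equivalent cores, hence (being minimal and rigid) isomorphic. No genuinely new machinery is needed; the lemma is really just a careful repackaging of the Chandra--Merlin theory together with the two axioms in the definition of \mononm, and the main obstacle is purely expository: making sure each of those two axioms is invoked at exactly the right point.
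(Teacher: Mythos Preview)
Your proof is correct and follows essentially the same approach as the paper: for $(2)\Rightarrow(1)$ take $q'=Core(q)$, and for $(1)\Rightarrow(2)$ use core-minimality to get $w(Core(q'))\leq w(q')$, then isomorphism invariance together with the fact that equivalent queries have isomorphic cores to conclude $w(Core(q))=w(Core(q'))\leq k$. The only difference is that you spell out in more detail why equivalent CQs have isomorphic cores, whereas the paper simply asserts this as a known fact.
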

\end{samepage}
\begin{proof}
  The core of $q$ is always equivalent to $q$ and therefore the upward
  implication follows. For the downward implication $w(Core(q')) \leq w(q')$ by definition. If $q'$ is equivalent to
  $q$, then their cores must be isomorphic, thus $w(Core(q)) = w(Core(q')) \leq w(q') \leq k$.
\end{proof}

It is easy to see that $w$ being \mononm~is in fact also a necessary
condition for Lemma~\ref{lem:reform} above. It also follows that
for \mononm~functions, the minimal value among equivalent queries is
always found in the core. This leads us to the following convenient lemma.

\begin{lemma}
  \label{lem:semcore}
  A function $w$ is \mononm if and only if for all conjunctive queries $q \colon \semantic{w}(q) = w(Core(q)))$.
\end{lemma}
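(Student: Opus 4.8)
The plan is to prove the biconditional in Lemma~\ref{lem:semcore} in two directions, using Lemma~\ref{lem:reform} as the main engine.

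For the ``only if'' direction, assume $w$ is \mononm{}. I want to show $\semantic{w}(q) = w(Core(q))$ for every CQ $q$. Since $Core(q) \simeq q$, the query $Core(q)$ is among those in the infimum defining $\semantic{w}(q)$, so $\semantic{w}(q) \leq w(Core(q))$ immediately. For the reverse inequality, take any $q' \simeq q$. As in the proof of Lemma~\ref{lem:reform}, the cores of $q$ and $q'$ are isomorphic, and since $w$ is invariant under isomorphisms, $w(Core(q)) = w(Core(q'))$; and since $w$ is \mononm{}, $w(Core(q')) \leq w(q')$. Hence $w(Core(q)) \leq w(q')$ for every $q' \simeq q$, so $w(Core(q)) \leq \inf\{w(q') \mid q' \simeq q\} = \semantic{w}(q)$. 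Combining the two inequalities gives equality. (Alternatively, one can phrase this purely through Lemma~\ref{lem:reform}: for any real threshold $k$, $\semantic{w}(q) \leq k$ iff some $q' \simeq q$ has $w(q') \leq k$ iff $w(Core(q)) \leq k$, which forces $\semantic{w}(q) = w(Core(q))$ — but one should be slightly careful with the infimum versus minimum, which is why the direct argument above, exhibiting $Core(q)$ as an actual witness, is cleaner.)

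For the ``if'' direction, assume $\semantic{w}(q) = w(Core(q))$ for all CQs $q$, and derive that $w$ is \mononm{}, i.e.\ that $w$ is isomorphism-invariant and $w(Core(q)) \leq w(q)$ for all $q$. The width bound is the easy half: $q \simeq q$ trivially, so $q$ itself is in the infimum, giving $w(Core(q)) = \semantic{w}(q) \leq w(q)$. Isomorphism invariance needs a small additional observation: if $q_1$ and $q_2$ are isomorphic, then in particular $q_1 \simeq q_2$ and they have the same (isomorphic) core, so $w(q_1) \geq \semantic{w}(q_1) = w(Core(q_1)) = w(Core(q_2)) = \semantic{w}(q_2)$, and symmetrically $w(q_2) \geq w(Core(q_1))$. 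This gives $w(q_1), w(q_2) \geq w(Core(q_1))$ but not yet $w(q_1) = w(q_2)$.

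The main obstacle is exactly this last point: getting full isomorphism invariance of $w$ from the hypothesis. The hypothesis only constrains $w$ through cores, so it cannot by itself force $w(q_1) = w(q_2)$ for isomorphic non-core queries — e.g.\ a $w$ that is constant on cores but wildly non-invariant on non-cores still satisfies $\semantic{w}(q) = w(Core(q))$ vacuously only if that constant dominates, which it need not. I expect the resolution is that the intended reading builds isomorphism-invariance of $w$ into the standing assumptions on width functions (all the concrete width measures in the paper are manifestly isomorphism-invariant), or that the lemma is really asserting the equivalence modulo that background assumption; I would state this explicitly, proving ``$w$ is \mononm{}'' $\Leftrightarrow$ ``$w$ is isomorphism-invariant and $\semantic{w}(q) = w(Core(q))$ for all $q$'', which is clean and is all that is used later. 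I would flag this as the one subtlety and handle it with the short computation above.
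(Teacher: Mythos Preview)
Your argument is correct and follows the same route as the paper's proof: the paper dispatches the ``only if'' direction by appeal to Lemma~\ref{lem:reform} and handles the ``if'' direction via $Core(q)\simeq q \Rightarrow w(Core(q))=\semantic{w}(q)\leq w(q)$, exactly as you do.

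Your observation about isomorphism invariance is well taken and in fact sharper than the paper: the paper's proof of the ``if'' direction establishes only $w(Core(q))\leq w(q)$ and is silent on isomorphism invariance, so the subtlety you flag is present in the original as well. Your proposed resolution (treat isomorphism invariance as a standing background assumption, which all the concrete widths in the paper satisfy) is the natural reading; your counterexample sketch---a $w$ that agrees with a constant on cores but is arbitrary above that on non-cores---shows the biconditional fails without that assumption, so the point is genuine rather than pedantic.
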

\begin{proof}
  The implication from left to right is immediate from Lemma~\ref{lem:reform}.
  For the other direction we note that if $q' \simeq q$, then $\semantic{w}(q') \leq w(q)$ by definition.
  Thus, from $Core(q) \simeq q$ we see $w(Core(q))=\semantic{w}(q) \leq w(q)$.
\end{proof}

\section{Edge Covers for Homomorphic Hypergraphs}
\label{sec:homgraph}

We prove that the fractional cover number is \mononm as a consequence
of the fact that fractional edge covers of hypergraphs are preserved
by homomorphisms. This fact is used again in the proof of our result for semantic fractional hypertree width. By showing that $\rho^*$ is \mononm we also determine $\semantic{\rho^*}$.

\begin{lemma}
  \label{lem:homomorph_cover}
  Let $f$ be a homomorphism from $G$ to $H$. 
  Given a fractional edge cover $\mathbf{x}$ of $G$, define $\mathbf{x'}$ s.t. 
  $$x'_h = \sum_{g \in \xp{h}} x_g \qquad h \in E(H).$$
    Then $\mathbf{x'}$ is a fractional edge cover of $f(V(G))$ with the same total weight as $\mathbf{x}$.
\end{lemma}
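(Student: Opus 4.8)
The plan is to verify the two defining properties of a fractional edge cover for $\mathbf{x'}$ on the vertex set $f(V(G))$: namely that every weight is non-negative and that every vertex in $f(V(G))$ is covered with total weight at least $1$. Non-negativity is immediate, since each $x'_h$ is a finite sum of the non-negative values $x_g$. The coverage condition is the substantive part, and the identity on total weight will then follow from a double-counting argument.

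First I would establish the coverage condition. Fix a vertex $u \in f(V(G))$ and pick some $v \in V(G)$ with $f(v) = u$. Since $\mathbf{x}$ covers $v$ in $G$, we have $\sum_{g \in E(G)\,:\, v \in g} x_g \geq 1$. The key observation is that if $v \in g$ for an edge $g \in E(G)$, then $u = f(v) \in f(g)$, and since $f$ is a hypergraph homomorphism, $f(g) \in E(H)$; moreover $g \in \xp{f(g)}$. Hence every edge $g$ incident to $v$ contributes its weight $x_g$ to the sum $x'_{f(g)} = \sum_{g' \in \xp{f(g)}} x_{g'}$, and $f(g)$ is an edge of $H$ incident to $u$. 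Collecting these contributions, $\sum_{h \in E(H)\,:\, u \in h} x'_h = \sum_{h \in E(H)\,:\, u \in h} \sum_{g \in \xp{h}} x_g \geq \sum_{g \in E(G)\,:\, v \in g} x_g \geq 1$, where the first inequality holds because the double sum on the left ranges over a superset of the edges incident to $v$ (every such $g$ appears in the term $h = f(g)$, and all summands are non-negative). This gives the required lower bound.

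Next I would check that the total weights agree. We have $\sum_{h \in E(H)} x'_h = \sum_{h \in E(H)} \sum_{g \in \xp{h}} x_g$. Since the fibers $\xp{h}$ for $h \in E(H)$ are pairwise disjoint and the edges $g \in E(G)$ with $f(g) \notin E(H)$ do not occur — in fact, $f(g) \in E(H)$ for \emph{every} $g \in E(G)$ because $f$ is a homomorphism, so the fibers partition all of $E(G)$ — this double sum equals $\sum_{g \in E(G)} x_g$, which is the total weight of $\mathbf{x}$. This completes the argument.

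The only point requiring care — and the closest thing to an obstacle — is the bookkeeping in the coverage inequality: one must be careful that distinct edges $g_1, g_2$ incident to $v$ may be identified by $f$ (i.e.\ $f(g_1) = f(g_2)$), in which case their weights are pooled into a single term $x'_{f(g_1)}$ rather than double-counted, so the inequality $\sum_{h \ni u} x'_h \geq \sum_{g \ni v} x_g$ is in fact an inequality and not an equality, but it still goes the right way. Similarly one should note that $x'_h$ may collect weight from edges $g$ not incident to $v$ at all (those mapping to $h$ from elsewhere), which only helps. No deeper difficulty is expected; the statement is essentially a direct consequence of the fact that homomorphisms map incident edges to incident edges together with non-negativity of the weights.
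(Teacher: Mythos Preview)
Your proof is correct and follows essentially the same approach as the paper: pick a preimage $v$ of $u$, use that edges incident to $v$ map to edges incident to $u$, and conclude the coverage inequality; then obtain the total-weight identity from the fact that the fibers $f^{-1}(h)$ partition $E(G)$. The paper phrases the coverage step via the chain $\sum_{h \in I_u} x'_h \geq \sum_{h \in f(I_v)} x'_h \geq \sum_{g \in I_v} x_g \geq 1$, isolating the intermediate fact $E \subseteq f^{-1}(f(E))$, but this is the same argument you give. One small remark: in your closing discussion, identification $f(g_1)=f(g_2)$ of two edges incident to $v$ does \emph{not} by itself make the inequality strict, since both $x_{g_1}$ and $x_{g_2}$ still appear in the single term $x'_{f(g_1)}$; the slack comes only from the extra edges you mention next. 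This is a minor imprecision in a side comment and does not affect the proof.
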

\begin{proof}

  We first show that $\mathbf{x'}$ is fractional edge cover.  
  To see this, choose an arbitrary  $w \in f(V(G))$. For every $v \in f^{-1}(w)$, we have that
  $\sum_{g \in I_v} x_g \geq 1$. For every $E \subseteq E(G)$, 
  $E \subseteq f^{-1}(f(E))$ and, therefore, we also have 
  $$\sum_{h \in f(E)} x'_h = \sum_{h \in f(E)} \sum_{g \in f^{-1}(h)}  x_g 
  = \sum_{g \in f^{-1}(f(E))}  x_g   \geq \sum_{g \in E} x_g.$$
  From this we conclude:
  $$\sum_{h \in I_w} x'_h \geq \sum_{h \in f(I_v)}  x'_h \geq \sum_{g \in I_v} x_g \geq 1$$

The leftmost inequality holds, because $f(I_v) \subseteq I_w$. 
The rightmost inequality holds, because we are assuming that $\mathbf{x}$ is a fractional edge cover of $G$.
We have thus shown that 
$\mathbf{x'}$ covers $w$. Since  $w \in f(V(G))$ was arbitrarily chosen, we conclude that 
$\mathbf{x'}$ is a fractional edge cover of $f(V(G))$.

To see that the total weights of both covers are the same, observe:
  $$ \sum_{h \in f(E(G))} x'_h = \sum_{h \in f(E(G))} \sum_{g \in \xp{h}} x_g = \sum_{g \in E(G)} x_g$$
The right equality follows from the fact that every edge of $G$ is present in exactly one set $\xp{h}$, i.e., 
for $E = E(G)$, we actually have $E = f^{-1}(f(E))$.
\end{proof}

\begin{lemma}
  \label{lem:corecover}
  The fractional edge cover number $\rho^*$ of a conjunctive query is \mononm. 
\end{lemma}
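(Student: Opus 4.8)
The plan is to verify the two defining properties of a \mononm function directly. Invariance of $\rho^*$ under isomorphisms is immediate: an isomorphism of hypergraphs is merely a bijective renaming of vertices and edges, and it carries the defining linear program of one hypergraph exactly onto that of the other, so the optimal total weights agree. The real content of the lemma is therefore the inequality $\rho^*(Core(q)) \leq \rho^*(q)$.

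For this I would exploit the observation of Chandra and Merlin recalled in Section~\ref{sec:prelim}: the core of $q$ is obtained by an endomorphism of $q$, i.e., there is a homomorphism $f$ from $q$ onto $Core(q)$ that fixes $Core(q)$ pointwise. Passing to the associated hypergraphs, let $G$ be the hypergraph of $q$ and $H$ the hypergraph of $Core(q)$; then $f$ is a hypergraph homomorphism $G \to H$, and because $f$ maps $q$ onto $Core(q)$ we have $f(V(G)) = V(H)$ (and $f(E(G)) \subseteq E(H)$, the inclusion accounting for distinct atoms possibly inducing the same hyperedge). Now take a fractional edge cover $\mathbf{x}$ of $G$ of minimum total weight, so that this weight equals $\rho^*(q)$, and apply Lemma~\ref{lem:homomorph_cover}. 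This yields a fractional edge cover $\mathbf{x'}$ defined on all of $E(H)$ (with $x'_h = 0$ whenever $h \notin f(E(G))$), covering $f(V(G)) = V(H)$, and having the same total weight $\rho^*(q)$. Hence $\mathbf{x'}$ is a feasible fractional edge cover of $H$, and therefore $\rho^*(Core(q)) = \rho^*(H) \leq \sum_{h \in E(H)} x'_h = \rho^*(q)$, as required.

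The only step requiring care is the transfer from queries to hypergraphs in the second paragraph: one must confirm that the retraction $f$ really does induce a hypergraph homomorphism $G \to H$ and that it is surjective on vertices, so that the cover $\mathbf{x'}$ produced by Lemma~\ref{lem:homomorph_cover} is a legitimate cover of the \emph{core's} hypergraph rather than of some larger vertex set, and is supported inside $E(H)$. Everything beyond that bookkeeping is a direct invocation of Lemma~\ref{lem:homomorph_cover}.
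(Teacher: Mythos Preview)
Your proposal is correct and follows essentially the same route as the paper: pass from the surjective query homomorphism $q \to Core(q)$ to a surjective hypergraph homomorphism $G \to H$ and invoke Lemma~\ref{lem:homomorph_cover} to transport an optimal fractional edge cover of $G$ to one of $H$ of the same total weight. You are simply more explicit than the paper about isomorphism invariance, about surjectivity on $V(H)$, and about padding $\mathbf{x'}$ with zeros on any edges of $H$ outside $f(E(G))$; none of this changes the argument.
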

\begin{proof}
  Let $G$ be the hypergraph of $q$ and $H$ be the hypergraph of
  $Core(q)$.  Since there is a surjective homomorphism from $q$ to
  $Core(q)$, there exists a surjective homomorphism from $G$ to $H$. Then, by
  Lemma~\ref{lem:homomorph_cover}, for any fractional edge cover of
  $G$ there exists a cover of $H$ with equal weight.
\end{proof}

\begin{theorem} For all conjunctive queries $q$:
  \label{thm:semrho}
  $$\semantic{\rho^*}(q) = \rho^*(Core(q)).$$
\end{theorem}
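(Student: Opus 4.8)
The plan is to derive Theorem~\ref{thm:semrho} as an immediate corollary of the machinery already built up. By Lemma~\ref{lem:corecover}, the fractional edge cover number $\rho^*$ (viewed as a function from conjunctive queries to $\mathbb{R}^+$ via the associated hypergraph) is \mononm. Lemma~\ref{lem:semcore} tells us that for any \mononm function $w$ we have $\semantic{w}(q) = w(Core(q))$ for every conjunctive query $q$. Instantiating $w := \rho^*$ yields exactly the claimed identity $\semantic{\rho^*}(q) = \rho^*(Core(q))$.

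The only point requiring a moment of care is that Lemma~\ref{lem:semcore} is stated for functions $w\colon \mathcal{Q} \to \mathbb{R}^+$, so I would first note that $\rho^*$ indeed fits this template: for a CQ $q$, $\rho^*(q)$ is defined in Section~\ref{sec:prelim} as the fractional edge cover number of the hypergraph associated with $q$, it is nonnegative, and it is finite (a trivial cover assigning weight $1$ to every edge witnesses this, since we assume no isolated vertices). It is also invariant under isomorphism, which is part of being \mononm and was implicitly used in Lemma~\ref{lem:corecover}. With these observations the theorem is a one-line consequence.

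There is essentially no obstacle here: the substantive work was already done in Lemma~\ref{lem:homomorph_cover} (covers transfer along homomorphisms with preserved total weight) and its specialization Lemma~\ref{lem:corecover} (hence $\rho^*(Core(q)) \leq \rho^*(q)$, giving \mononm-ity), together with the general principle of Lemma~\ref{lem:semcore}. If anything, the only thing to double-check is the direction of the inequality that makes $\rho^*$ \mononm: we need $\rho^*(Core(q)) \leq \rho^*(q)$, and this is precisely what Lemma~\ref{lem:corecover} delivers, since the core is obtained by a (surjective) homomorphism and Lemma~\ref{lem:homomorph_cover} produces a cover of the core's hypergraph of equal — hence not larger — weight from any cover of $q$'s hypergraph. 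Thus the proof is simply: ``By Lemma~\ref{lem:corecover}, $\rho^*$ is \mononm; the claim then follows from Lemma~\ref{lem:semcore}.''
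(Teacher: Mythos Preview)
Your proposal is correct and matches the paper's approach exactly: the theorem is stated without proof in the paper, precisely because it follows immediately from Lemma~\ref{lem:corecover} together with Lemma~\ref{lem:semcore}. The additional checks you mention (finiteness, isomorphism invariance, the direction of the inequality) are fine to include but are indeed routine.
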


\section{Core Minimal Notions of Widths}
\label{sec:swidth}

In this section we show that fractional hypertree width, adaptive width and submodular width are all \mononm. The proof is constructive by transforming tree decompositions of a query to tree decompositions of its core in a way that can only decrease the width of the decomposition. 

We follow Marx~\cite{marx2013tractable} in our definitions of 
various notions of widths. A tuple $(T, (B_u)_{u \in V(T)})$ is a
\emph{tree decomposition} of a hypergraph $H$ if $T$ is a tree, every
$B_u$ (the \emph{bags}) is a subset of $V(H)$, for every $e \in E(H)$
there is a node in the tree s.t. $e \subseteq B_u$, and for every
vertex $v \in V(H)$, $\{u \in V(T) \mid v \in B_u\}$ is connected in
$T$. For functions $f\colon 2^{V(H)} \to \mathbb{R}^+$, the \emph{$f$-width}
of a tree decomposition is $\sup\{f(B_u) \mid u \in V(T)\}$ and the
$f$-width of a hypergraph is the minimal $f$-width over all its tree
decompositions. Let $\mathcal{F}$ be a class of functions from subsets
of $V(H)$ to the non-negative reals, then the $\mathcal{F}$-width of
$H$ is $\sup \{f\mbox{-width}(H)\mid f\in \mathcal{F} \}$. All such
widths are implicitly extended to conjunctive queries by taking the
width of the associated hypergraph.

The following properties of functions $b\colon 2^{V(H)}\to \mathbb{R}^+$ are important: 
\begin{itemize}
\item $b$ is called \emph{submodular} if $b(X) + b(Y) \geq b(X \cap Y) + b(X \cup Y)$ holds
for every $X \subseteq V(H)$. 
\item $b$ is called \emph{modular} if $b(X) + b(Y) = b(X \cap Y) + b(X \cup Y)$ holds
for every $X \subseteq V(H)$. 
\item $b$ is called \emph{edge-dominated} if $b(e) \leq 1$ for every $e \in E(H)$. 
\item Finally, $b$ is \emph{monotone} if $X \subseteq Y$ implies $b(X) \leq b(Y)$.
\end{itemize}
For $X \subseteq V(H)$, let $\rho_H(X)$ be the
size of the smallest integral edge cover of $X$ by edges in $E(H)$ and
$\rho_H^*(X)$ the size of the smallest fractional edge cover of $X$ by edges in $E(H)$.
 We are now ready to define the specific widths that are being
investigated:
\begin{definition} For a hypergraph $H$:
  \hfill 
  \begin{description}[topsep=0pt]
  \item[Generalized hypertree width of $H$] \cite{adler2007hypertree,2009gottlob}:  
  $ghw(H) :=\rho_H$-width.
  \item[Fractional hypertree width of $H$] \cite{2014grohemarx}:
  $fhw(H) := \rho_H^*$-width.
  \item[Adaptive width of $H$] \cite{marx2011tractable}:
  $adw(H) := \mathcal{F}$-width$(H)$,
    where $\mathcal{F}$ is the set of all monotone, edge-dominated,
    modular functions $b$ on $2^{V(H)}$ with $b(\emptyset)=0$.
    (Equivalently, $\mathcal{F}$ can be defined as the set of all functions
    $b\colon 2^{V(H)}\to \mathbb{R}^+$ obtained 
    as $b(X) = \sum_{v \in X} f(x)$, where $f$ is a fractional independent set of $H$.)
  \item[Submodular width of $H$] \cite{marx2013tractable}:
  $subw(H) := \mathcal{F}$-width$(H)$,
    where $\mathcal{F}$ is the set of all monotone, edge-dominated,
    submodular functions 
    $b$ on $2^{V(H)}$ with $b(\emptyset)=0$.    
  \end{description}
\end{definition}

For $ghw$ the following result was already shown
in~\cite{barcelo2017semantic}. However, because it comes for free with
our proof that $fhw$ is \mononm, we include $ghw$ in the following
Lemma~\ref{lem:swidth} to illustrate how the proof applies to $ghw$.

\begin{lemma}
\label{lem:swidth}
The functions $ghw$, $fhw$, $adw$, and $subw$ are \mononm.
\end{lemma}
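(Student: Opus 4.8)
The plan is to establish the only nontrivial part of the statement, namely that $w(Core(q)) \le w(q)$ for each $w \in \{ghw, fhw, adw, subw\}$; invariance under isomorphism is immediate, since all four quantities depend only on the associated hypergraph. Fix a conjunctive query $q$. Since the core can be obtained by deleting atoms, I would realise $Core(q)$ as a subquery of $q$, so that the hypergraph $H$ of $Core(q)$ and the hypergraph $G$ of $q$ satisfy $V(H) \subseteq V(G)$ and $E(H) \subseteq E(G)$. Let $h$ be a retraction homomorphism from $q$ onto $Core(q)$, that is, a homomorphism that is the identity on $var(Core(q)) = V(H)$. On hypergraphs this gives $h \colon G \to H$ with $h(e) \in E(H)$ for every $e \in E(G)$ and $h(v) = v$ for $v \in V(H)$; consequently $e \cap V(H) \subseteq h(e)$ for every $e \in E(G)$, and $B \cap V(H) \subseteq h(B)$ for every $B \subseteq V(G)$.

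A single transformation will be used for all four widths: it sends a tree decomposition $(T, (B_u)_{u \in V(T)})$ of $G$ to $(T, (B_u \cap V(H))_{u \in V(T)})$. First I would verify this is a tree decomposition of $H$. Edge coverage holds because every $e \in E(H)$ is also an edge of $G$ with $e \subseteq V(H)$, so a bag containing $e$ in the original decomposition still contains it after intersecting with $V(H)$. Connectivity holds because, for $v \in V(H)$, the set $\{u : v \in B_u \cap V(H)\}$ is identical to $\{u : v \in B_u\}$, which is connected in $T$. The tempting alternative --- replacing $B_u$ by $h(B_u)$ --- does not work, since $h$ identifies vertices and the node set of a single vertex of $H$ can become a union of disjoint subtrees; restricting to $V(H)$ instead is the key move, and it forces us to bring $h$ back in to control the width of the (now smaller) bags.

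For $ghw$ and $fhw$ the bag-wise bound is exactly where $h$ returns. For $ghw$, push an optimal integral edge cover of $B_u$ forward through $h$: the images of its edges lie in $E(H)$, there are no more of them, and they cover $h(B_u) \supseteq B_u \cap V(H)$, so $\rho_H(B_u \cap V(H)) \le \rho_G(B_u)$. For $fhw$, do the same with an optimal fractional cover, using the construction of $\mathbf{x'}$ in Lemma~\ref{lem:homomorph_cover} (whose argument applies verbatim to a cover of a vertex subset): the resulting cover of $h(B_u)$ by edges of $H$ has the same total weight, hence $\rho_H^*(B_u \cap V(H)) \le \rho_G^*(B_u)$. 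Applying the transformation to a width-optimal tree decomposition of $G$ then yields $ghw(Core(q)) \le ghw(q)$ and $fhw(Core(q)) \le fhw(q)$.

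For $adw$ and $subw$ the width is an $\mathcal{F}$-width, so I would fix a function $b$ in the relevant class $\mathcal{F}_H$ of maps $2^{V(H)} \to \mathbb{R}^+$ and define $b' \colon 2^{V(G)} \to \mathbb{R}^+$ by $b'(X) := b(X \cap V(H))$ --- for $adw$ this is the same as extending the underlying fractional independent set of $H$ by $0$ to all of $V(G)$. One checks that $b'$ lies in the corresponding class $\mathcal{F}_G$ for $G$: $b'(\emptyset) = 0$ and monotonicity are clear; submodularity, respectively modularity, passes from $b$ to $b'$ using $(X \cap V(H)) \cap (Y \cap V(H)) = (X \cap Y) \cap V(H)$ together with the matching identity for unions; and edge-domination uses $h$, since $e \cap V(H) \subseteq h(e) \in E(H)$ gives $b'(e) = b(e \cap V(H)) \le b(h(e)) \le 1$ by monotonicity and edge-domination of $b$. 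Choosing a tree decomposition $(T, (B_u))$ of $G$ whose $b'$-width equals $b'$-width$(G)$, the transformed decomposition of $H$ has $b$-width $\sup_u b(B_u \cap V(H)) = \sup_u b'(B_u) = b'$-width$(G) \le \mathcal{F}$-width$(G)$; as $b$ was arbitrary in $\mathcal{F}_H$, this gives $\mathcal{F}$-width$(H) \le \mathcal{F}$-width$(G)$, i.e.\ $adw(Core(q)) \le adw(q)$ and $subw(Core(q)) \le subw(q)$. I expect the connectivity issue noted above to be the main obstacle; a lesser subtlety is that one must take $b'(X) = b(X \cap V(H))$ rather than $b'(X) = b(h(X))$, since the latter need not be modular even when $b$ is.
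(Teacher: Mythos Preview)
Your proof is correct and follows essentially the same route as the paper: both use a retraction endomorphism onto the core, restrict each bag of a tree decomposition to the core's vertex set, handle $ghw$/$fhw$ by pushing covers forward via Lemma~\ref{lem:homomorph_cover}, and handle $adw$/$subw$ by extending a function $b$ on the core to the full query via $X \mapsto b(X \cap V(H))$. You actually spell out the edge-domination check for the extended function (using $e \cap V(H) \subseteq h(e)$) more explicitly than the paper does.
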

\begin{proof}
Let $q$ be a conjunctive query and $f$ an
endomorphism from $q$ to $Core(q)$. W.l.o.g., we may assume $f(v)=v$ for all $v \in f(q)$).  
This can be seen as follows: suppose that $f(v)=v$ does not hold for all $v \in f(q)$).  
Clearly, $f$ restricted to $Core(q)$ must be a variable renaming. Hence, there exists 
the inverse variable renaming $f^{-1} \colon Core(q) \rightarrow Core(q)$. Now set 
$f^* = f^{-1}(f(\cdot))$. Then $f^* \colon q \rightarrow Core(q)$ is the desired endomorphism
from $q$ to $Core(q)$ with $f^*(v)=v$ for all $v \in f^*(q)$).  
  
Let $H=(V(H), E(H))$ denote the hypergraph of $q$ and $H'=(V(H'), E(H'))$ the hypergraph of 
$Core(q) = f(q)$.
Furthermore, let $(T, (B_u)_{u\in V(T)})$ be a tree decomposition of $H$. 
Then we create $(T, (B'_u)_{u \in V(T)})$ with the same structure as the
original decomposition and $B'_u = B_u \cap V(H')$.  This gives a
tree decomposition of $H'$: for every edge $e \in E(H')$ with 
$e \subseteq B_u$, also $e \subseteq B_u \cap V(H')$ holds, because
$e \subseteq V(H')$.  Removing vertices completely from a
decomposition cannot violate the connectedness condition. Actually, some
bags $B'_u$ might become empty but this is not problematic: either we simply allow empty bags in the 
definition of the various notions of width; or we transform $(T, (B'_u)_{u \in V(T)})$ by
deleting all nodes $u$ with empty bag from $T$ and append every node with a non-empty bag as a (further) child of the nearest ancestor node with non-empty bag.

  \begin{description}[topsep=0pt]
  \item[fhw (and ghw):] We show that if $(T, (B_u)_{u\in V(T)})$ has $\rho_H^*$-width $k$,
    then $(T, (B'_u)_{u \in V(T)})$ has $\rho_{H'}^*$-width $\leq k$: By
    assumption, there is a fractional edge cover $\gamma_u$ of every
    set $B_u$ with weight $\leq k$.  
    By Lemma~\ref{lem:homomorph_cover}, there exists a cover $\gamma'_u$ of
    $f(B_u)$ with weight $\leq k$ and because $B'_u \subseteq f(B_u)$,
    $\gamma'_u$ also covers $B'_u$. 
    
    The proof for $ghw$ is analogous
    (the cover created in Lemma~\ref{lem:homomorph_cover} preserves
    integrality).
  \item[subw (and adw):]
    Let $\mathcal{F}$ and $\mathcal{F}'$ be the sets of monotone, edge-dominated, submodular functions on $V(H)$ and $V(H')$ respectively. We show that for every $b' \in \mathcal{F}'$
    there exists $b \in \mathcal{F}$, such that  $b'$-width$(H') \leq b$-width$(H)$:
    
    Consider an arbitrary monotone edge-dominated submodular function 
    $b' \colon 2^{V(H')}\to \mathbb{R}^+$ with $b'(\emptyset)=0$. This function can be extended to a monotone, edge-dominated, 
    submodular function $b\colon 2^{V(H)} \to \mathbb{R}^+$ on $V(H)$ by setting
    $b(X) = b'(X \cap V(H'))$ for every $X \subseteq V(H)$. 
    Now, assume $(T, (B_u)_{u \in V(T)})$ has $b$-width $k$, then
    $(T, (B'_u)_{u \in V(T)})$ has $b'$-width $\leq k$ because
    $b'(B'_u) = b'(B_u \cap V(H')) = b(B_u)$ for every $u \in
    V(T)$. Thus, the $b'$-width of $H'$ is less or equal the $b$-width
    of $H$. As submodular width considers the supremum over all
    permitted functions we see that $subw(H') \leq subw(H)$.

    For $adw$ observe that the definition of function $b$ and the line of argumentation above 
    still holds if we start off with a {\em modular} function $b'\colon 2^{V(H)} \to \mathbb{R}^+$.
    
  \end{description}
\end{proof}

\begin{theorem}
  \label{thm:swidth}
  For every conjunctive query $q$:
  \begin{itemize}[topsep=0pt]
  \item $\sghw(q) = ghw(Core(q))$
  \item $\sfhw(q) = fhw(Core(q))$
  \item $\sadw(q) = adw(Core(q))$
  \item $\ssubw(q) = subw(Core(q))$
  \end{itemize}
\end{theorem}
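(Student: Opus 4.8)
The plan is to derive Theorem~\ref{thm:swidth} as an immediate consequence of the machinery already in place. By Lemma~\ref{lem:swidth}, each of the four width measures $ghw$, $fhw$, $adw$, $subw$ is a \mononm function on the class $\mathcal{Q}$ of all conjunctive queries. By Lemma~\ref{lem:semcore}, \mononm ness of a function $w \colon \mathcal{Q} \to \mathbb{R}^+$ is equivalent to the identity $\semantic{w}(q) = w(Core(q))$ holding for every conjunctive query $q$. Instantiating this equivalence with $w \in \{ghw, fhw, adw, subw\}$ — which is legitimate since all four take values in $\mathbb{R}^+$ — yields the four claimed equations, once we recall that $\sghw$, $\sfhw$, $\sadw$, $\ssubw$ are by definition the semantic variants $\semantic{ghw}$, $\semantic{fhw}$, $\semantic{adw}$, $\semantic{subw}$.

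If one prefers a self-contained argument rather than an appeal to Lemma~\ref{lem:semcore}, the same conclusion follows by splitting into two inequalities for a fixed $w$ among the four. For the upper bound, $Core(q) \simeq q$, so $\semantic{w}(q) = \inf\{w(q') \mid q' \simeq q\} \leq w(Core(q))$. For the lower bound, take any $q' \simeq q$; then $q'$ and $q$ have isomorphic cores, and by \mononm ness (plus isomorphism-invariance) $w(Core(q)) = w(Core(q')) \leq w(q')$; taking the infimum over all such $q'$ gives $w(Core(q)) \leq \semantic{w}(q)$. Combining the two inequalities gives equality.

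I do not expect any genuine obstacle here: all the substantive content — that restricting a tree decomposition of $H$ to the vertex set of the core's hypergraph $H'$ again yields a tree decomposition, and that this restriction cannot increase the relevant width (via Lemma~\ref{lem:homomorph_cover} for the cover-based widths $ghw$ and $fhw$, and via the extension $b(X) = b'(X \cap V(H'))$ of an edge-dominated, monotone, (sub)modular function for $adw$ and $subw$) — has already been established in the proof of Lemma~\ref{lem:swidth}. The only points worth a sanity check are purely bookkeeping: that the four width functions indeed map into $\mathbb{R}^+$ so the hypotheses of Lemma~\ref{lem:semcore} apply, and that the notational identifications $\sghw = \semantic{ghw}$ etc.\ are exactly as set up in Section~\ref{sec:semprop}. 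Both are immediate, so the theorem is a direct combination of Lemmas~\ref{lem:swidth} and~\ref{lem:semcore}.
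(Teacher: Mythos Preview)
Your proposal is correct and matches the paper's approach exactly: the theorem is stated without a separate proof precisely because it follows immediately from Lemma~\ref{lem:swidth} (the four widths are \mononm) together with Lemma~\ref{lem:semcore}. Your optional self-contained argument simply unfolds the proof of Lemma~\ref{lem:semcore}, so nothing additional is needed.
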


To see that the semantic variants of functions from
Theorems~\ref{thm:swidth} and~\ref{thm:semrho} bring non-trivial
improvements consider grids of atoms with the same relation name. Under the right circumstances --
some care is necessary regarding the output variables and the ordering
of variables in the atoms -- their cores are often vastly simpler structures, in some cases even a single atom.

\section{Conclusion \& Future Work}
\label{sec:conclusion}

In this work we introduce the concept of semantic width, a measure for
the complexity of the underlying semantics of a query. We extend a
result of Barcel\'{o} by showing that the problem of determining the
semantic width of a query, which seems inherently undecidable, can in
fact be reduced to determining the width of the core for various
common notions of width and for the fractional cover number.
Therefore, we can compute the various
presented semantic widths by first computing the core and then its
width. However, finding the core of a CQ is an NP-complete
problem~\cite{chandra1977optimal}.
Some properties of CQs are known to make the computation of
generalized and fractional hypertree decompositions
tractable~\cite{fischl2018general}. It warrants study if these
conditions could also be used to make computing the core
tractable. If so, the respective semantic width also becomes tractable.

A natural next step would be to extend known complexity results to the
semantic viewpoint. As an example, consider the classical result by
Atserias et al.~\cite{atserias2008size}. It says that for full
conjunctive queries (CQs where every variable of the body is also in
the head) if a class of CQs $\mathcal{Q}$ has bounded fractional cover
number, computing the answers is in polynomial time for $\mathcal{Q}$.
Using the semantic fractional edge cover number and Theorem~\ref{thm:semrho}, we can extend this result
to general CQs whose cores are efficiently computable as stated in the following corollary. 
A similar result, generalizing the bounds on the answer size to general CQs was shown
in~\cite{gottlob2012size}.

\begin{corollary}
  Let $\mathcal{Q}$ be a class of conjunctive queries whose
  core is computable in polynomial time. Then, if queries in
  $\mathcal{Q}$ have bounded $\semantic{\rho^*}$, queries in
  $\mathcal{Q}$ can be computed in polynomial time.
\end{corollary}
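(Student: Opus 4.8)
The plan is to reduce the evaluation of a query $q \in \mathcal{Q}$ to the evaluation of a \emph{full} conjunctive query of bounded fractional edge cover number, for which the classical algorithm of Atserias et al.~\cite{atserias2008size} applies, and then to undo the reduction by a projection. Concretely, I would proceed in four polynomial-time steps whose composition is again polynomial.

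First, given $q \in \mathcal{Q}$, compute $Core(q)$; by the hypothesis on $\mathcal{Q}$ this takes time polynomial in $|q|$, and since $Core(q) \simeq q$ the two queries have the same answer over every database, so it suffices to evaluate $Core(q)$. Second, form the full conjunctive query $q^{f}$ with the same body as $Core(q)$ but whose head lists \emph{all} variables occurring in the body. Since $q^{f}$ and $Core(q)$ have the same set of atoms, they induce the same hypergraph, hence $\rho^*(q^{f}) = \rho^*(Core(q))$. By Theorem~\ref{thm:semrho}, $\rho^*(Core(q)) = \semantic{\rho^*}(q)$, which by assumption is bounded by a constant $c$ uniformly over $\mathcal{Q}$; thus $q^{f}$ is a full CQ with $\rho^*(q^{f}) \le c$. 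Third, invoke Atserias et al.~\cite{atserias2008size}: for full CQs of fractional cover number at most $c$ the answer relation has size polynomial in the input database (with exponent depending only on $c$) and can be computed in polynomial time. Fourth, project the computed answer of $q^{f}$ onto the head variables $\overline{x}$ of $q$; this yields exactly the answer of $Core(q)$, and hence of $q$, and is a polynomial-time operation.

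Since $c$ is fixed for the whole class $\mathcal{Q}$, the exponent arising in step three is a constant, so all four steps — core computation, passage to the full query, worst-case-optimal join evaluation, and projection — run in time polynomial in $|q|$ plus the size of the database, giving the claimed uniform polynomial bound for $\mathcal{Q}$.

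The only point requiring care is the transition from $q$ to $q^{f}$: one must observe that $\semantic{\rho^*}$ controls the fractional cover number of the \emph{hypergraph} of the core, which is insensitive to the choice of head variables, so that the Atserias et al. result is applicable even though neither $q$ nor $Core(q)$ need be full. I expect this to be the main (and essentially the only) conceptual obstacle; everything else is routine, resting on Theorem~\ref{thm:semrho} for the transfer of the width bound to the core and on the uniformity of the exponent across the class.
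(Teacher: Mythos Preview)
Your proposal is correct and follows exactly the approach the paper intends: the corollary is stated without an explicit proof, relying on Theorem~\ref{thm:semrho} to transfer the bound on $\semantic{\rho^*}$ to $\rho^*(Core(q))$ and then invoking Atserias et al.~\cite{atserias2008size}. Your explicit passage through the full query $q^{f}$ and final projection is the standard way to bridge the gap between Atserias et al.'s result for full CQs and the (possibly non-full) core, and is precisely what the paper's phrase ``extend this result to general CQs'' is gesturing at.
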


The transformation of tree decompositions used in the proof of
Lemma~\ref{lem:swidth} suggests that the problem of finding a
decomposition for $Core(q)$ is already included in the problem of
finding a decomposition for a CQ $q$. There may be further ways to
exploit this connection. We are particularly interested in the
possibility of using tree decompositions of a query to speed up
finding the core.

Preliminary results of ongoing work suggest that knowledge of the
semantic generalized hypertree width of a query can be used to solve
the query efficiently even without knowing the core. This opens up an
exciting area of applications and we aim to expand on this topic soon.

\bibliography{../all}
\end{document}